\documentclass[journal,comsoc]{IEEEtran}

\IEEEoverridecommandlockouts

\usepackage{todonotes}
\usepackage{amsmath}
\allowdisplaybreaks
\usepackage{amssymb,amsfonts}
\usepackage{mathtools}
\usepackage{algorithm}
\usepackage{algpseudocode}
\usepackage{array}
\usepackage{graphicx}
\usepackage{xspace}
\usepackage[nopostdot,acronym,shortcuts,nonumberlist]{glossaries}
\usepackage{booktabs}
\usepackage[hyphens]{url}
\usepackage{enumitem}
\usepackage{cite}
\usepackage{comment}
\usepackage[bookmarks=false,hidelinks]{hyperref}
\usepackage{subcaption}
\usepackage{setspace}
\usepackage[table, dvipsnames]{xcolor}
\usepackage{makecell}

\usepackage{bbm}
\usepackage{dsfont}
\usepackage{amsthm}
\usepackage{optidef}
\PassOptionsToPackage{bookmarks=false}{hyperref}
\theoremstyle{plain}
\newtheorem{lemma}{Lemma}
\newtheorem{theorem}{Theorem}
\theoremstyle{definition}
\newtheorem{definition}{Definition}

\newcommand{\rnats}{\mathbb{N}_0}
\newcommand{\pnats}{\mathbb{N}^+}
\newcommand{\reals}{\mathbb{R}}
\newcommand{\indicate}{\mathbb{I}}
\DeclareMathOperator*{\maximum}{\mathrm{max}} 
\DeclareMathOperator*{\limit}{\mathrm{lim}} 
 
\DeclareMathOperator*{\argmax}{arg\,max}
\newcommand{\opti}[1]{#1^\ast}
\DeclareMathOperator*{\expect}{\mathbb{E}}
\newcommand{\cexpect}[2]{\expect_{#1\sim #2}}
\newcommand{\states}{\mathcal{S}}
\newcommand{\actions}{\mathcal{A}}
\newcommand{\V}{V^\ast}
\newcommand{\Q}{Q^\ast}
\newcommand{\aoi}[1]{\mathrm{AoI}_{#1}}
\newcommand{\apaoi}[1]{\mathrm{AoI}^\mathrm{AP}_{#1}}
\newcommand{\maxaoi}{\Lambda}
\newcommand{\pwk}{p^\mathrm{w}}
\newcommand{\ptx}{p^\mathrm{Tx}}

\newcommand{\threshold}{\kappa}
\newcommand{\param}{\theta}

\newacronym{ud}{UD}{User Device}
\newacronym{es}{ES}{Edge Server}
\newacronym{pu}{PU}{Processing Unit}
\newacronym{gpu}{GPU}{Graphical Processing Unit}
\newacronym{flop}{FLOP}{Floating Point Operations}
\newacronym{flops}{FLOP/s}{Floating Point Operations per Second}
\newacronym{sut}{SUT}{Semantic Unit}
\newacronym{qoe}{QoE}{Quality-of-Experience }

\newacronym{bs}{BS}{Base Station}
\newacronym{snr}{SNR}{signal-to-noise ratio}
\newacronym{drl}{DRL}{Deep Reinforcement Learning}
\newacronym{marl}{MARL}{Multi-Agent Reinforcement Learning}
\newacronym{posg}{POSG}{Partially Observable Stochastic Game}
\newacronym{dqn}{DQN}{Deep Q-Network}

\newacronym{mec}{MEC}{Mobile Edge Computing}
\newacronym{ppo}{PPO}{Proximal Policy Optimisation}
\newacronym{mappo}{MAPPO}{Multi-Agent Proximal Policy Optimisation}

\newacronym{6G}{6G}{sixth-generation}

\newacronym{rl}{RL}{Reinforcement Learning}
\newacronym{iot}{IoT}{Internet-of-Things}
\newacronym{id}{device}{Internet-of-Things device}
\def\id{device\xspace}
\def\ids{devices\xspace}

\newacronym{aoi}{AoI}{Age of Information}
\newacronym{ap}{AP}{Access Point}

\newacronym{mdp}{MDP}{Markov Decision Process}

\newacronym{ann}{ANN}{Artificial Neural Network}
\newacronym{snn}{SNN}{Spiking Neural Network}
\newacronym{ifn}{IF-neuron}{Integrate and Fire Neuron}

\begin{document}

\begin{titlepage}
    \centering
    \vspace*{\fill}

\centering
\textbf{Preprint Notice}\\
\vspace*{1em}
\fbox{
    \begin{minipage}{0.8\textwidth}
        \centering
        \vspace{0.8em}
        This work has been submitted to the IEEE for possible publication.
        Copyright may be transferred without notice, after which this
        version may no longer be accessible.
        \vspace{0.8em}
    \end{minipage}
}
\vspace*{\fill}
\end{titlepage}

\bstctlcite{IEEEexample:BSTcontrol}

\title{Threshold-Based Spiking Neural Networks for Event-Driven Status Update  Systems} 

\author{Marco Fries, Andrea Ortiz\\
\IEEEauthorblockA{
Institute of Telecommunications, Vienna University of Technology, Austria,\\\text{marco.fries@tuwien.ac.at, andrea.ortiz@tuwien.ac.at}}
\thanks{This work is funded by the Vienna Science and Technology Fund (WWTF) [Grant ID: 10.47379/VRG23002] and by the IoT-ZERO project which received funding from the Smart Networks and Services Joint Undertaking (SNS JU) under the European
Union’s Horizon Europe research and innovation program under Grant Agreement
No. 101292662.}}

\maketitle

\begin{abstract}
Event-driven sensing supports energy-efficient \gls{iot} devices by activating communication only when relevant events occur. In such systems, transmission decisions are governed by the monitored process rather than predefined schedules.  Consequently, jointly optimising information freshness and energy consumption is challenging because transmission decisions are restricted to randomly occurring events. To address this challenge, we investigate an event-driven status update system in which wake-up events follow the dynamics of the monitored process. The problem of determining whether to transmit the sensing data or not is cast as a \gls{mdp} that jointly minimises the \gls{aoi} and transmission energy. We prove the existence of an optimal threshold policy, thereby obtaining an interpretable characterisation of the optimal transmission strategy. Motivated by this result, we propose a lightweight \gls{rl} approach based on \glspl{snn} whose architecture explicitly represents threshold policies. The resulting policy representation has constant complexity with respect to the maximum \gls{aoi} and enables a more energy-efficient implementation than a comparable \gls{ann}. Numerical results demonstrate that the proposed \gls{snn} reliably learns optimal thresholds across different operating regimes.
\end{abstract}
\section{Introduction}
\label{sec:intro}
\IEEEPARstart{T}{imely} information delivery is a fundamental requirement in many monitoring and sensing applications, particularly when sensing devices can only communicate intermittently. Such systems can be modelled as status update systems, where sensing devices observe a physical process and communicate updates to a remote destination. A key performance metric in these systems is information freshness, commonly quantified by the \gls{aoi} \cite{Yates2021}.

In many practical deployments, sensing devices are battery-powered or rely on energy harvesting. Consequently, energy-efficient operation is essential for maintaining long-term connectivity and ensuring sustainable information updates. A common approach is to place devices into deep-sleep states, during which energy consumption is minimised. Existing models of status update systems typically assume that transmission opportunities and sleep durations are determined by the transmitter, a scheduler, or a predefined control policy.
However, event-driven operation is increasingly employed in sensing applications  \cite{Kolios2016,SamurAI2023}. In such systems, communication decisions are triggered by the evolution of the monitored process rather than by externally imposed schedules. As a result, the communication process becomes intrinsically coupled with the dynamics of the monitored system.

While prior work has extensively investigated \gls{aoi}-optimal scheduling and energy-efficient status updating \cite{Zhou,ceran2021,Cao2023,DeSombre2023,Dongare,DeSombre2026}, the structure of optimal policies for process-driven sleep behaviour remains largely unexplored. In particular, it is unclear whether such systems admit simple optimal policies that can be efficiently implemented on resource-constrained devices.
A comparison of related work and the positioning of the proposed approach is provided in Table~\ref{tab:sota}.
In this letter, we investigate an event-driven status update system in which the sleep behaviour of the \gls{iot} \ids is governed by the evolution of the monitored process. Our main theoretical result establishes the existence of an optimal threshold policy. This characterisation yields an interpretable solution approach and provides fundamental insights into the freshness-energy trade-off of event-driven status update systems. Building on this result, we develop a lightweight learning-based \gls{snn} implementation, enabling efficient policy optimisation with a complexity that does not scale with the maximum \gls{aoi}.
The main contributions of this letter are summarised as follows:
\begin{itemize}
\item We introduce a mathematical framework for modelling and analysing event-driven status update systems with process-dependent sleep behaviour.
\item We prove the existence of an optimal threshold policy for the considered event-driven model.
\item We develop a lightweight policy-gradient \gls{snn} architecture that approximates an optimal threshold policy while maintaining a complexity that does not scale with the maximum \gls{aoi}.
\item We validate the theoretical findings through numerical simulations and demonstrate the effectiveness and energy efficiency of the proposed approach.
\end{itemize}

The remainder of this letter is organised as follows. Section~\ref{sec:systModel} presents the system model. Section~\ref{sec:problem} formulates the optimisation problem. Section~\ref{sec:solution} describes the proposed solution approach.Section~\ref{sec:energy} compares the inference energy consumption of the proposed solution and an \gls{ann} benchmark. Numerical results are provided in Section~\ref{sec:results}, followed by conclusions in Section~\ref{sec:conclusion}.

\begin{table}[t]
\begin{center}
    \scriptsize
    \renewcommand{\arraystretch}{0.6}
    \begin{tabular}{|c|*{7}{>{\centering\arraybackslash}m{0.35cm}|}}
        \hline
        & \makecell*{\hspace{-0.1cm}\cite{Zhou}} & \makecell*{\hspace{-0.1cm}\cite{ceran2021}} & \makecell*{\hspace{-0.1cm}\cite{Cao2023}} & \makecell*{\hspace{-0.1cm}\cite{DeSombre2023}} & \makecell*{\hspace{-0.1cm}\cite{Dongare}} & \makecell*{\hspace{-0.1cm}\cite{DeSombre2026}} & \cellcolor[gray]{0.9} \makecell*{\vspace{-0.15cm}\hspace{-0.15cm} Ours} \\
        \hline
        \hline
        \makecell*{Energy Awareness} 
        & \checkmark & \checkmark & \checkmark & \checkmark & \checkmark & \checkmark & \cellcolor[gray]{0.9}\checkmark \\
        \hline
        \makecell*{Random sensing} 
        & \checkmark & \checkmark & & \checkmark  & \checkmark & \checkmark & \cellcolor[gray]{0.9}\checkmark \\
        \hline
        \makecell*{Event-based} 
        & & & &  & & & \cellcolor[gray]{0.9}\checkmark \\
        \hline
        \makecell*{Unknown dynamics} 
        & & & \checkmark & \checkmark & \checkmark & \checkmark & \cellcolor[gray]{0.9}\checkmark \\
        \hline
    \end{tabular}
    \caption{Summary of the related work.}
    \label{tab:sota}
\end{center}
\end{table}

\textit{Notation:} The indicator function $\mathbb{I}[\phi]$ takes the value $1$ if the Boolean statement $\phi$ holds and $0$ otherwise. For any set $\mathcal{X}$, $\Delta\mathcal{X}$ denotes the simplex of probability distributions over $\mathcal{X}$. By $X\sim\mathbb{P}(\cdot\mid\boldsymbol{\theta})$, we indicate that $X$ is a random variable distributed according to $\mathbb{P}(\cdot\mid\boldsymbol{\theta})$. Expectations with respect to a distribution are denoted by $\mathbb{E}_{X\sim\mathbb{P}(\cdot\mid\boldsymbol{\theta})}[\cdot]$.

\section{System Model}
\label{sec:systModel}
We consider a status-update system consisting of a battery-powered \gls{iot} \id and an \gls{ap}. The \id monitors a physical process through an event-based sensor, e.g., a neuromorphic camera. Status updates can be transmitted from the \id to the \gls{ap} via a wireless point-to-point communication channel. The \gls{ap} requires fresh data, and the goal of the \id is to provide fresh updates to the \gls{ap} while minimising its own energy consumption in order to remain operational for an extended period of time.

The \id operates in two modes, namely \textit{deep-sleep} mode and \textit{awake} mode.
While in deep-sleep mode, the \id's energy consumption is minimal. In this state, the event-based sensor can trigger a \textit{wake-up} event at the \id. In that case, the \id transitions from deep-sleep mode to awake mode, where it receives a status update, decides whether or not to transmit it to the \gls{ap}, and then transitions back to deep-sleep mode.
Time is divided into discrete time steps $t\in\mathbb{N}$ of a fixed duration $\tau$, measured in seconds. 
The wake-up events are modelled as a Bernoulli process, meaning that 
at any time step $t$, a wake-up event occurs with fixed probability $\pwk\in(0,1]$. Let $\mathcal{W}:=\{w_1,\dots\}\subseteq\rnats$ denote the set of time steps at which a wake-up event occurs.
The channel between the \id and the \gls{ap} is modelled as a binary erasure channel with a success probability $\ptx\in(0,1]$. This means a transmission succeeds with probability $\ptx$ and fails with probability $1-\ptx$.

We use \gls{aoi} as a metric to measure data-freshness. The \gls{aoi} at $t\in\mathbb{N}^+$ is inductively defined by
\begin{equation}
    \aoi{t+1}:=\begin{cases}
        0&\text{on successful data update},\\
        \aoi{t}+1&\text{otherwise,}
    \end{cases}\label{eq:aoibase}
\end{equation}
and $\aoi{0}:=0$.
We assume that transmissions from \id to \gls{ap} are performed within one time step and that there is an upper bound $\maxaoi\in\mathbb{N}$, after which an increase of \gls{aoi} at the \gls{ap} is indifferent for the application at hand. Based on \eqref{eq:aoibase}, the \gls{aoi} of the \gls{ap} at the $(t+1)$-th wake-up event $w_{t+1}$, $t\in\mathbb{N}_0$, evolves according to
\begin{equation}
    \apaoi{w_{t+1}}:=\begin{cases}
        \delta_t&\text{on successful data}\\
        &\text{transmission at }w_t,\\
        \min\{{\apaoi{w_t}+\delta_t,\maxaoi}\}&\text{otherwise,}
    \end{cases}
\end{equation}
where $\delta_t$ is the deep-sleep duration between $w_t$ and $w_{t+1}$. Note that $\delta_t$ follows a geometric distribution with parameter $\pwk$. We assume that there is an instantaneous, perfect feedback channel from the \gls{ap} to the \id, enabling the \id to keep track of $\apaoi{t}$ without delay. Since the transmission decisions are restricted to wake-up events $w\in\mathcal{W}$, the \gls{aoi} of the \id is $0$ when such a decision is made. Thus, we omit the \gls{aoi} at the \id in our model.

We assume that the amount of energy available to the \id is limited by a battery with a capacity of $E^\mathrm{Battery}\in\reals^+$ and that the \id consumes a fixed amount of energy, $0<E^{\mathrm{Tx}}\leq E^\mathrm{Battery}$, per transmission.
To ensure long-term operability, our goal is to optimise the trade-off between the \id's energy consumption and the data-freshness at the \gls{ap}.

\begin{figure}
    \centering
    \includegraphics[width=0.95\linewidth]{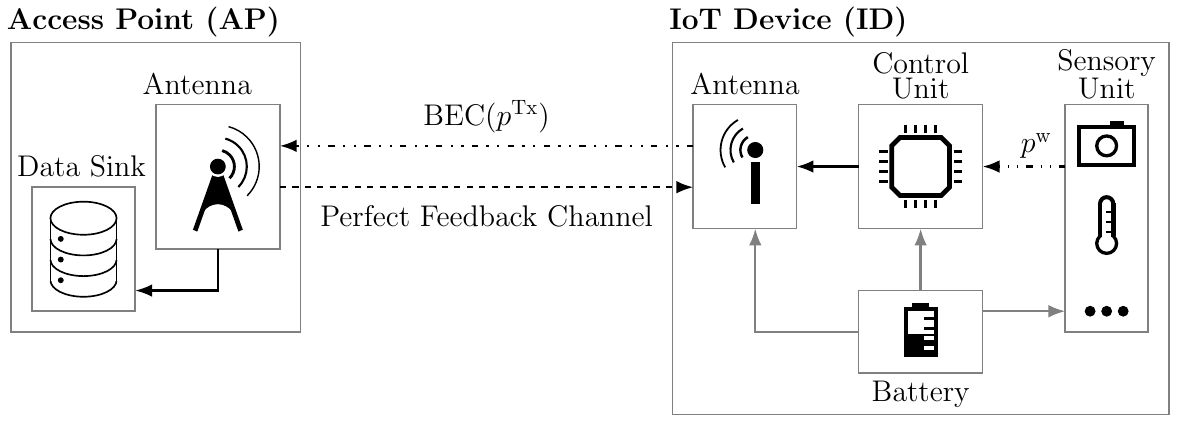}
    \vspace{-0.1cm}
    \caption{System model}
    \label{fig:sysmodel}
    \vspace{-0.3cm}
\end{figure}
\section{Problem formulation}
\label{sec:problem}
In this section, we formulate the transmission decision-making process of the device as a \gls{mdp}. At each wake-up event $w\in\mathcal{W}$, the \id observes the $\apaoi{w}$ and decides whether or not to transmit its current data to the \gls{ap}. The goal of the \id is to jointly minimise the \gls{aoi} at the \gls{ap} and its own energy consumption. The components of the \gls{mdp} $\mathcal{M}=(\mathcal{S},\mathcal{A},F,r)$ are specified as follows:
\noindent \\
\textbf{The set of states} $\states:=\{1,\dots,\maxaoi\}$ consists of all observable values of $\apaoi{}$.\\
\textbf{The set of actions} $\actions:=\{0,1\}$ represents the transmission decisions of the \id. Here $1$ indicates that a transmission attempt is made, and $0$ indicates otherwise. \\
\textbf{The transition probability} $F(s^\prime|s,a)$ denotes the probability of transitioning to the state $s^\prime$ when taking action $a$ in state $s$. It arises from the geometric distribution governing the deep-sleep durations between wake-up events and the probability of a successful transmission. More precisely
\begin{align*}
F(s^\prime\mid s,0)&=\indicate[s<s^\prime](1-\pwk)^{s^\prime-s-1}\pwk\\
    &\phantom{=}+\indicate[s^\prime=\maxaoi](1-\pwk)^{\maxaoi-s},
\end{align*}
and
\begin{align*}
    F(s^\prime\mid s,1)&=\ptx F(s^\prime\mid0,0)\\
    &\phantom{=}+(1-\ptx)F(s^\prime\mid s,0).
\end{align*}
\textbf{The reward function} $r$ captures a trade-off between $\apaoi{}$, weighted by $c^\mathrm{AoI}\in\reals^+$, and the \id's energy consumption, weighted by $c^\mathrm{E}\in\reals^+$
\begin{align*}
    r\colon\mathcal{S}\times\mathcal{A}\to\reals;\quad
    (s,a)\mapsto -c^{\mathrm{AoI}}s - c^{\mathrm{E}}a\mathrm{E}^\mathrm{Tx}.
\end{align*}
The coefficients $c^\mathrm{AoI}$ and $c^\mathrm{E}$ quantify the relative importance assigned to information freshness and energy consumption, respectively. Accordingly, $c^\mathrm{AoI}$ is measured in $\mathrm{s}^{-1}$ and $c^\mathrm{E}$ in $\mathrm{J}^{-1}$. 
At each wake-up event of the system, the \id makes a transmission decision and receives an immediate reward signal.
A \textit{trajectory} $\psi=(s_0,a_0,\ldots,s_{H-1},a_{H-1},s_H)$ of length $H\in\pnats$ in $\mathcal{M}$ records the observed values of $\apaoi{}$ and the corresponding transmission decisions of the \id over successive wake-up events. For a trajectory $\psi$, its average return is
$R(\psi):=\frac{1}{H}\sum_{t=0}^{H-1} r(s_t,a_t)$. The probability $\mu(\psi\mid\pi)$ of observing the trajectory $\psi$ under a policy $\pi\colon\states\to\Delta\actions$ is
$$\mu(\psi\mid\pi):=\nu(s_0)\prod_{t=0}^{H-1}\pi(a_t\mid s_t)F(s_{t+1}\mid s_t,a_t),$$
where $\nu\in\Delta\states$ denotes the initial-state distribution.
We aim to find a policy maximising the expected long-term average return
\begin{equation}
    \max_{\pi\colon\states\to\Delta\actions}\hspace{0.5em}
    \limit_{H\to\infty}\hspace{0.5em}\cexpect{\psi}{\mu(\cdot\mid\pi)}[R(\psi)].\label{eq:optProblem}
\end{equation}

\section{Proposed solution}
\label{sec:solution}
\subsection{Optimality of threshold policies}
In this subsection, we first prove that the optimisation problem \eqref{eq:optProblem} admits an optimal threshold policy. Motivated by this result, we then propose a lightweight \gls{rl} approach.

\begin{definition}\label{def:pith}
 For $\threshold\in\reals$, the policy $\vartheta_\threshold\colon\states\to\Delta\actions$ defined by
\begin{equation*}
    \vartheta_\threshold(0\mid s):=\indicate[s<\threshold]\text{ and }\vartheta_\threshold(1\mid s):=\indicate[\threshold\leq s]
\end{equation*}
is called the \textit{threshold policy} with \textit{threshold} $\threshold$.
\end{definition}

\begin{theorem}\label{thm:main}
    There exists $\threshold^\ast\in\reals$ such that $\vartheta_{\threshold^\ast}$ solves \eqref{eq:optProblem}.
\end{theorem}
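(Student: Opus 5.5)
Our plan is to treat \eqref{eq:optProblem} as a finite average-reward \gls{mdp}, reduce it to the average-cost optimality equation, and show that the relative value function is monotone in the state; the threshold structure then follows from a comparison of the two actions. Since $\states$ and $\actions$ are finite, a stationary deterministic optimal policy exists. To obtain a solution $(g,h)$ of the optimality equation
\begin{equation*}
g+h(s)=\max_{a\in\actions}\Big\{r(s,a)+\sum_{s'}F(s'\mid s,a)h(s')\Big\},
\end{equation*}
we will verify that the \gls{mdp} is unichain. Under any policy, state $\maxaoi$ is reached with positive probability from every state. When $\pwk<1$, consecutive misses accumulate to $\maxaoi$. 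When $\pwk=1$, or when transmissions succeed, a short path still leads there, since action $0$ always increases $s$ and $\maxaoi$ is absorbing under action $0$ up to reset. Hence $\maxaoi$ lies in the unique recurrent class. This justifies restricting attention to greedy stationary deterministic policies with respect to $h$. Here $F(\cdot\mid 0,0)$ is interpreted as the post-success age distribution, i.e.\ geometric on $\{1,\dots,\maxaoi\}$.

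Next we write the gain of transmitting as
\begin{equation*}
D(s):=\Big[r(s,1)+\textstyle\sum_{s'}F(s'\mid s,1)h(s')\Big]-\Big[r(s,0)+\textstyle\sum_{s'}F(s'\mid s,0)h(s')\Big].
\end{equation*}
Using the mixture form of $F(\cdot\mid s,1)$, this becomes
\begin{equation*}
D(s)=-c^{\mathrm E}E^{\mathrm{Tx}}+\ptx\Big(\textstyle\sum_{s'}F(s'\mid0,0)h(s')-\sum_{s'}F(s'\mid s,0)h(s')\Big).
\end{equation*}
The first sum does not depend on $s$. Thus $D$ is nondecreasing in $s$ provided $s\mapsto \sum_{s'}F(s'\mid s,0)h(s')$ is nonincreasing. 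We will show this in two steps. First, $F(\cdot\mid s,0)$ is the law of $\min\{s+\delta,\maxaoi\}$ with $\delta$ geometric, so it is stochastically increasing in $s$ by a coupling with the same $\delta$. Second, $h$ is nonincreasing.

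Once $D$ is nondecreasing, we set $\threshold^\ast:=\min\{s: D(s)\ge 0\}$, or $\threshold^\ast:=\maxaoi+1$ if no such $s$ exists. Then $\vartheta_{\threshold^\ast}$ is greedy with respect to $h$ and therefore optimal. Ties are broken toward action $1$, which remains greedy.

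The main obstacle is proving that $h$ is nonincreasing. We plan to do this by relative value iteration,
\begin{equation*}
h_{n+1}(s)=\max_a\{r(s,a)+\textstyle\sum_{s'}F(s'\mid s,a)h_n(s')\}-(\text{same at a reference state}),
\end{equation*}
starting from $h_0\equiv 0$, and preserving monotonicity by induction. If $h_n$ is nonincreasing, then $r(s,a)$ is decreasing in $s$ for each $a$. By the coupling, $\sum_{s'}F(s'\mid s,0)h_n(s')$ is nonincreasing. The quantity $\sum_{s'}F(s'\mid s,1)h_n(s')$ is a mixture of a constant and that same nonincreasing term. A maximum of nonincreasing functions is nonincreasing, and subtracting a constant preserves this.

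Convergence of relative value iteration requires aperiodicity. To guarantee it, we will first apply the standard aperiodicity transformation $\tilde F=\alpha F+(1-\alpha)I$. This leaves the optimal policies unchanged, and the same monotonicity argument goes through because the identity kernel preserves monotonicity. Alternatively, we could use the vanishing-discount approach. There we prove that the discounted value $V_\beta$ is nonincreasing for every $\beta$ via the same induction on value iteration, which converges by contraction. We then take $h=\lim_{\beta\to1}(V_\beta-V_\beta(\maxaoi))$ along a subsequence, which is legitimate for finite unichain \glspl{mdp}. Either route yields the required monotone $h$ and completes the proof of Theorem~\ref{thm:main}.
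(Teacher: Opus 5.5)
Your proof is correct in its overall logic, but it reaches the key monotonicity step by a different route than the paper.

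\textbf{What is shared.} The reduction is the same in both proofs. Your formula for $D(s)$ via the mixture form of $F(\cdot\mid s,1)$ is the paper's Lemma~2, where it appears as $\partial L^\ast_1=(1-\ptx)\,\partial L^\ast_0$. Your coupling $\min\{s+\delta,\maxaoi\}$ is the probabilistic form of the paper's identities $F(\tilde s\mid s,0)=F(\tilde s+1\mid s+1,0)$ and $F(\maxaoi-1\mid s,0)=F(\maxaoi\mid s+1,0)-F(\maxaoi\mid s,0)$. These give $\partial L^\ast_0(s)=\sum_{\tilde s=s+1}^{\maxaoi-1}F(\tilde s\mid s,0)\,\partial\V(\tilde s)$.

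\textbf{Where you differ.} The difference is how the value function is shown to be monotone.
\begin{itemize}
\item The paper inducts backwards over the \emph{state} index. It starts from $\partial L^\ast_0(\maxaoi-1)=0$ and uses that action $0$ only moves upward, so that $\partial\V(\tilde s)\ge c^{\mathrm{AoI}}+(1-\ptx)\,\partial L^\ast_0(\tilde s)>0$ for all $\tilde s>s$ by the induction hypothesis. This applies directly to any solution of the Bellman equation, with no iteration or limit.
\item You induct over value-iteration steps and then pass to a limit, either by relative value iteration after an aperiodicity transform or by vanishing discount. This needs only stochastic monotonicity and monotone rewards, not the upward-only structure.
\end{itemize}
The price is the convergence machinery for exactly the step you call the main obstacle. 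Once you have $(g,h)$, you could bypass it by running the paper's state-wise induction on $h$.

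\textbf{Where your setup is more careful.} Your explicit average-reward optimality equation is more rigorous than the paper's formulation. The paper defines $\V(s)$ as the optimal long-run average return from $s$, which is constant in $s$ in this model. Hence the relations $\Q=r+\mathbb{E}[\V]$ and $\V=\max_a\Q$ only make sense if $\V$ is read as the bias $h$; the gain cancels in all the differences used. Your setup supplies exactly this reading.

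\textbf{A claim that needs correcting.} The MDP is \emph{not} unichain when $\pwk=\ptx=1$.
\begin{itemize}
\item In that case action $0$ maps $s\mapsto\min\{s+1,\maxaoi\}$ deterministically, and action $1$ resets to $1$ deterministically.
\item The policy that transmits only at $s=1$ then has two closed classes, $\{1\}$ and $\{\maxaoi\}$.
\item So your assertion that $\maxaoi$ is reached from every state under every policy fails here. It does hold for $\pwk<1$ (in one step), and for $\pwk=1,\ \ptx<1$ (along $s\to s+1\to\cdots$).
\end{itemize}

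\textbf{The fix is easy.} The MDP is always communicating: transmit to reach state $1$ with positive probability, then stay silent up to any target. Consequently:
\begin{itemize}
\item the optimality equation still has a solution with constant gain;
\item every policy greedy with respect to it attains that gain from every initial distribution $\nu$, so it solves \eqref{eq:optProblem};
\item both of your limit arguments remain valid for finite communicating MDPs.
\end{itemize}
Replace ``unichain'' by ``communicating'', or treat this corner case separately, and the rest of your argument goes through.
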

\begin{proof}
    Cf. Appendix \ref{sec:optpi}.
\end{proof}

While Theorem~\ref{thm:main} establishes the existence of an optimal threshold $\threshold^\ast$, determining its value requires knowledge of the underlying system dynamics, such as the channel quality and the wake-up probability \cite{Cao2023}.
Since these quantities are unknown a priori, we employ \gls{rl} and an \gls{snn}-based policy representation to approximate the optimal threshold policy through interaction with the environment.
\subsection{Preliminaries on Spiking Neural Networks}

We consider \glspl{snn} due to their potential for energy-efficient computation, particularly when implemented on specialised neuromorphic hardware.
Inspired by the information processing mechanisms observed in biological neural systems, information in \glspl{snn} is communicated through discrete, threshold-triggered \textit{spikes}.
If information is communicated only through sparse spike events, the number of computations and memory accesses can be substantially reduced compared with conventional \glspl{ann}.
To this end, each neuron in a \gls{snn} maintains an internal \textit{membrane potential} and an associated \textit{firing threshold}. The input of a neuron is encoded as \textit{input current}, which is added to its membrane potential. When the membrane potential exceeds the firing threshold, the neuron emits a spike to its successor neurons.

In this work, we exploit the inherent threshold mechanism of spiking neurons to obtain a policy representation that directly mirrors the structure of a threshold policy (cf. Definition~\ref{def:pith}). Consequently, the resulting \gls{snn} architecture admits an interpretable parametrisation and yields a compact representation whose complexity does not scale with the maximum \gls{aoi}.

\subsection{Proposed SNN-Architecture}
\label{subsec:snn}

The proposed architecture employs an \gls{snn} consisting of two \glspl{ifn} that implement a randomised threshold policy $\zeta_{\threshold,\omega}\colon\states\to\Delta\actions$. For this, the observed $\apaoi{}$ $s$ is encoded in the neuron's input currents $\hat{I}$ and $\check{I}$ defined as
\begin{equation*}
    \hat{I}(s):=s+\indicate[s=\threshold]\varepsilon;\quad
    \check{I}(s):=2\threshold-s,
\end{equation*}
where $\varepsilon>0$ is a small constant.
Each neuron has its own membrane potentials $\hat{U}$, respectively $\check{U}$, initially set to $0$. The neurons  share a common firing threshold $\threshold\in\reals$ and spike-grading factor $\omega>0$, which together are the learnable parameters of the architecture. The input current of a neuron is added to its membrane potential. If a neurons potential surpasses $\threshold$, it outputs a \textit{graded spike} of amplitude $\omega>0$; and $0$ otherwise. The resulting spiking behaviour is
\begin{equation*}
        \hat{Z}=\omega\indicate[\hat{U}>\threshold]\text{ respectively }\check{Z}=\omega\indicate[\check{U}>\threshold].
\end{equation*}
By construction
$
\hat{Z}=\omega\vartheta_\threshold(1\mid s)
$
and
$
\check{Z}=\omega\vartheta_\threshold(0\mid s),
$
meaning that the spiking behaviour of the neurons emulates the decision making of the threshold policy $\vartheta_\threshold$. 
The spiking behaviour of both neurons determines the network's output
\begin{align*}
    \tilde{I}:=\hat{Z}-\check{Z}=\omega[\vartheta_\threshold(1\mid s)-\vartheta_\threshold(0\mid s)].
\end{align*} 
This output is then decoded into the probability of making a transmission attempt
$
    \zeta_{\threshold,\omega}(1\mid s):=\sigma(\tilde{I})\in(0,1),
$
where $\sigma(x)=\frac{1}{1+exp(-x)}$.
Finally, the transmission decision $a\in\actions$ is sampled as $a\sim\zeta_{\threshold,\omega}(\cdot\mid s)$. The sigmoid transforms the emulated threshold behaviour into a probabilistic policy, where the spike-grading factor $\omega$ controls the exploration-exploitation trade-off, which is essential for training in \gls{rl}. 
In essence, we exploited Theorem \ref{thm:main} to reduce the policy search space to a two-dimensional family parametrised by $\threshold$ and $\omega$, enabling efficient learning despite unknown system dynamics.\footnote{The implementation will be published upon completion of the review.}

The resulting architecture and its parameters admit a natural and explicit interpretation rather than relying on a black-box policy representation. This also yields an efficient encoding of $\states$, which would otherwise be challenging for \glspl{snn}, since the size of $\states$ scales with $\maxaoi$.

\subsection{Benchmark ANN-Architecture}
\label{subsec:ann}
As a benchmark for the proposed solution, we use a structurally related \gls{ann}-based policy $\alpha_{w,b}\colon\states\to\Delta\actions$. Here, the probability of making a transmission attempt is given by
\begin{equation*}
    \alpha_{w,b}(1\mid s):=\sigma(w s+b)\in(0,1),
\end{equation*}
where $w,b\in\reals$ are the scalar weight and bias of the single artificial neuron, which replaces the spiking mechanism of $\zeta_{\threshold,\omega}$. As for $\zeta_{\threshold,\omega}$, the observed $\apaoi{}$-value $s$ is used as the input activation.
Consequently, $\alpha_{w,b}(0\mid s)=1-\alpha_{w,b}(1\mid s)$. The transmission decisions $a\in\actions$ are sampled as $a\sim\alpha_{w,b}(\cdot\mid s)$. Similarly to the \gls{snn}, we use $\sigma$ as the activation function, which yields a probabilistic policy suited for \gls{rl}.
Here, the magnitude of the weight $w$ controls the steepness of the sigmoid and thus the degree of randomness in the resulting policy.
In contrast to our proposed \gls{snn}, the threshold is not represented explicitly here. Instead, the quantity $-\frac{b}{w}$ can be interpreted as the effective threshold of the policy.

\section{Inference energy consumption comparison}
\label{sec:energy}
In this section, we compare the energy costs during inference of the proposed \gls{snn} solution $\zeta_{\threshold,\omega}$ and the \gls{ann} benchmark $\alpha_{w,b}$. An explicit comparison depends on the underlying hardware and may differ across applications and future advancements in chip technology. Thus, we quantify energy costs using an abstract operation-level accounting model. Both policy implementations employ the same sigmoid-based randomisation and Bernoulli-sampling procedure. The energy cost of this component depends strongly on implementation details and is therefore represented by an abstract energy cost $\mathrm{E}_{\sigma,\mathcal{B}}$.
For the pre-sigmoid process, following \cite{Davidson}, we assume that the costs $E_\mathrm{A}$ of the addition operation and $E_{W}$ of writing to memory are $\epsilon$ measured in $[J]$, and the costs $E_\mathrm{M}$ of multiplication and $E_\mathrm{R}$ of reading from memory are $5\epsilon$ \cite{Davidson}.
We assume that the input $s\in\states$ does not need to be read in.

\textbf{The \gls{snn}} policy reads the parameters $\threshold$ and $\omega$ from memory. To obtain the network's output, the input $s$ is compared to $\threshold$. Then the output is $-\omega$ if $s<\threshold$ and $\omega$ otherwise. The necessary operations are one comparison and adjusting the sign of $\omega$. Since the energy model in \cite{Davidson} does not specify the energy costs of comparison and bit-flip operations, we model both operations as incurring the energy cost $\mathrm{E}_A$. In summary,
\begin{align*}
\mathrm{E}[\zeta_{\threshold,\omega}]=\underbrace{2E_\mathrm{R}}_{\text{read parameters}}+\underbrace{E_\mathrm{A}}_{\text{if comparison}}+\underbrace{E_\mathrm{A}}_{\text{sign of output}}.
\end{align*}

\textbf{The \gls{ann}} policy reads the parameters $w$ and $b$ from memory. To obtain the networks output, the weight $w$ is multiplied by the input-value $s$, and the bias $b$ is added to the result. This requires one multiplication and one addition. In summary,
\begin{align*}
\mathrm{E}[\alpha_{w,b}]=\underbrace{2E_\mathrm{R}}_{\text{read parameters}}+\underbrace{E_\mathrm{M}}_{\text{product}}+\underbrace{E_\mathrm{A}}_{\text{add bias}}.
\end{align*}
In comparison $\mathrm{E}[\zeta_{\threshold,\omega}]=12\epsilon+\mathrm{E}_{\sigma,\mathcal{B}}<16\epsilon+\mathrm{E}_{\sigma,\mathcal{B}}=\mathrm{E}[\alpha_{w,b}]$, 
i.e., $\zeta_{\threshold,\omega}$ consumes less energy than $\alpha_{w,b}$ in our operation-based comparison. Essentially, the \gls{snn} replaces the multiplication and addition required by the \gls{ann} with a comparison and a bit-flip operation. This means that, per transmission decision, the \gls{snn} has a $25\%$ reduced energy consumption in the pre-sigmoid part compared to the \gls{ann}.
\section{Numerical evaluation}
\label{sec:results}
The parameters used are summarised in Table~\ref{tab:sys}.

\begin{table}[!ht]
\begin{center}
\caption{System model parameters}
\label{tab:sys}
\scriptsize
\begin{tabular}{|c|c|c|}
    \hline
    transmission success probability & $\ptx$ & $0.9$\\
    \hline
    wake-up probability & $\pwk$ & $0.5$\\
    \hline
    transmission energy \cite{DeSombre2026} & $E^\mathrm{Tx}$ & $64.34304\,\mathrm{mJ}$\\
    \hline
    maximum \gls{aoi} value & $\maxaoi$ & $16\,\mathrm{s}$\\
    \hline
    energy-weight coefficient & $c_\mathrm{E}$ & $300\,\mathrm{J}^{-1}$\\
    \hline
    \gls{aoi}-weight coefficient & $c_\mathrm{AoI}$ & $1\,\mathrm{s}^{-1}$\\
    \hline
\end{tabular}
\end{center}
\end{table}

We compare the mean $\apaoi{}$ and energy consumption of our proposed \gls{snn} with  the following policies:
\begin{itemize}
\item \textit{Optimal}: The optimal policy $\vartheta_{\threshold^\ast}$ (cf. Theorem~\ref{thm:main}).
\item \textit{ANN}: The trained policy introduced in Section~\ref{subsec:ann}.
\item \textit{Randomly transmit}: A policy that attempts transmission independently with probability $0.5$ at wake-up events.
\item \textit{Always transmit}: A policy minimising the mean $\apaoi{}$ at the expense of maximising energy consumption.
\item \textit{Never transmit}: A policy minimising energy consumption while yielding the largest possible mean $\apaoi{}$.
\end{itemize}
The \gls{snn} and \gls{ann} are trained using the same procedure, except for individually adjusted learning rates.

The performance in Fig.~\ref{fig:aoiandenergy} is reported as the mean over $100$ independently seeded environments, each simulated for $100$ time steps. For the \textit{SNN} and \textit{ANN} policies, the reported performance is further averaged over five independently trained policy instances with different initial thresholds $\threshold_0$.

\begin{figure}[t]
    \centering
    \includegraphics[width=1\linewidth]{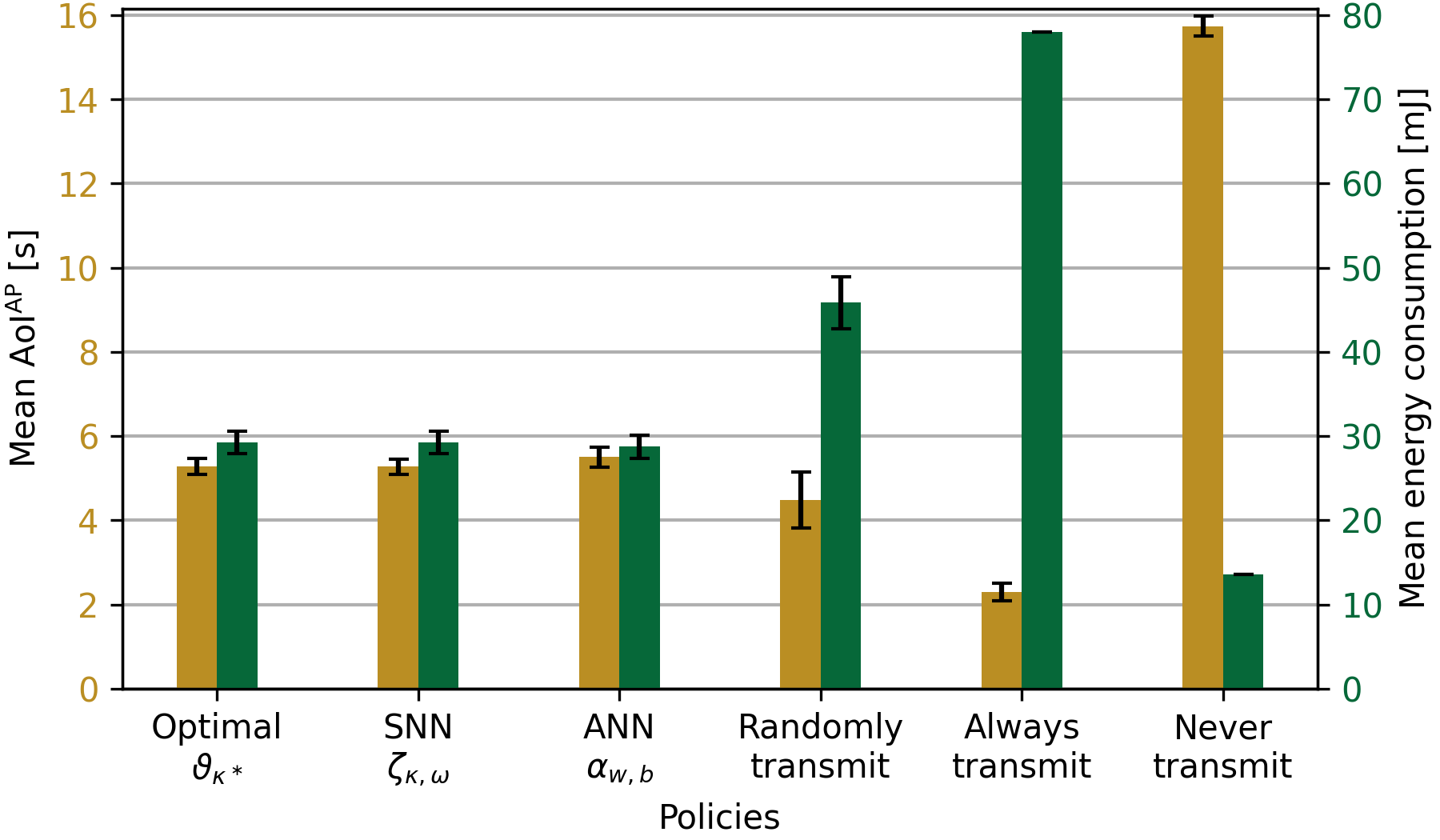}
    \caption{Energy and \gls{aoi} performance.}
    \label{fig:aoiandenergy}
\end{figure}

The results in Fig.~\ref{fig:aoiandenergy} show that the proposed \gls{snn} policy $\zeta_{\threshold,\omega}$ closely approximates the optimal policy $\vartheta_{\threshold^\ast}$ and performs on par with the \gls{ann} benchmark $\alpha_{w,b}$. The remaining performance gap results from the stochastic policy representations of the learnt policies. Compared with the \textit{Randomly transmit} policy, all three reduce energy consumption by more than $33\%$ while increasing the mean $\apaoi{}$ by less than $15\%$.
Compared with the \textit{Always transmit} policy, the optimal, \gls{snn}, and \gls{ann} policies reduce energy consumption by approximately $62\%$. Conversely, the \textit{Always transmit} policy reduces the mean $\apaoi{}$ by approximately $60\%$ relative to these policies.
The optimal policy reduces the mean $\apaoi{}$ by approximately $65\%$ relative to the \textit{Never transmit} policy. In contrast, the \textit{Never transmit} policy consumes approximately $54\%$ less energy than the optimal policy.  Note that here, the $\apaoi{}$ of the \textit{Never transmit} policy strongly depends on the value of $\maxaoi$.

The optimal threshold for the considered system parameters is $\threshold^\ast=8$. Figure~\ref{fig:trainingthreshold} shows that the proposed \gls{snn} consistently converges to the optimal threshold policy $\vartheta_{\threshold^\ast}$ despite different initial threshold parameters $\threshold_0$. 
\begin{figure}[t]
    \centering    \includegraphics[width=1\linewidth]{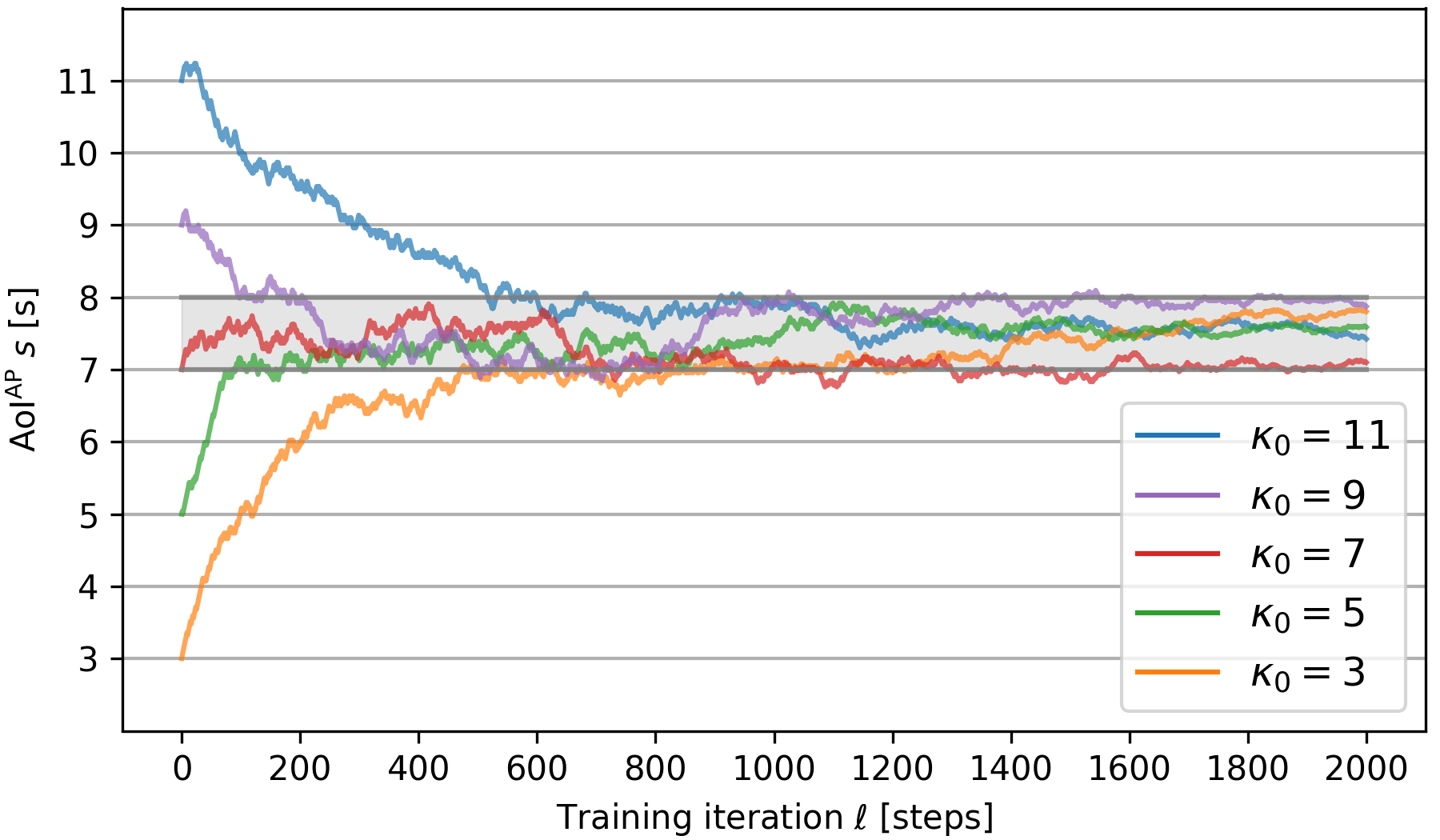}
    \caption{$\threshold_\ell$ values during training.}
    \label{fig:trainingthreshold}
\end{figure}
The grey region indicates the interval of threshold values that induce the optimal threshold policy.
After approximately $500$ training iterations, all policy instances have entered this interval and remain close to it throughout the remainder of training, despite being initialised with substantially different threshold values.

Figure~\ref{fig:thoverenvs} shows that the proposed \gls{snn} consistently learns the optimal threshold policy even when transmission energy becomes more important than information freshness.
The values of $c_\mathrm{E}$ are selected such that
$c_\mathrm{E}E^\mathrm{Tx}\approx 5,10,15,20,$ and $25$, respectively. Thus, one transmission attempt is assigned a cost equivalent to approximately $5$, $10$, $15$, $20$, or $25$ seconds of information staleness. 
As expected, increasing the relative cost of transmission leads to larger optimal thresholds, indicating that the device waits longer before attempting transmission when energy becomes more valuable relative to information freshness.
The mean values are computed over the last $1000$ training iterations, while the grey regions indicate the corresponding optimal threshold intervals.
The learnt thresholds lie within the optimal regions for all considered environments. The only exception is the second environment, in which thresholds $4$ and $5$ yield nearly identical returns. Consequently, the learnt threshold fluctuates between thresholds $4$ and $5$. The black bars indicate the minimum and maximum threshold values observed across the last $1000$ training iterations of all five trained policy instances and closely match the corresponding optimal threshold regions.

\begin{figure}[t]
    \centering
    \includegraphics[width=1\linewidth]{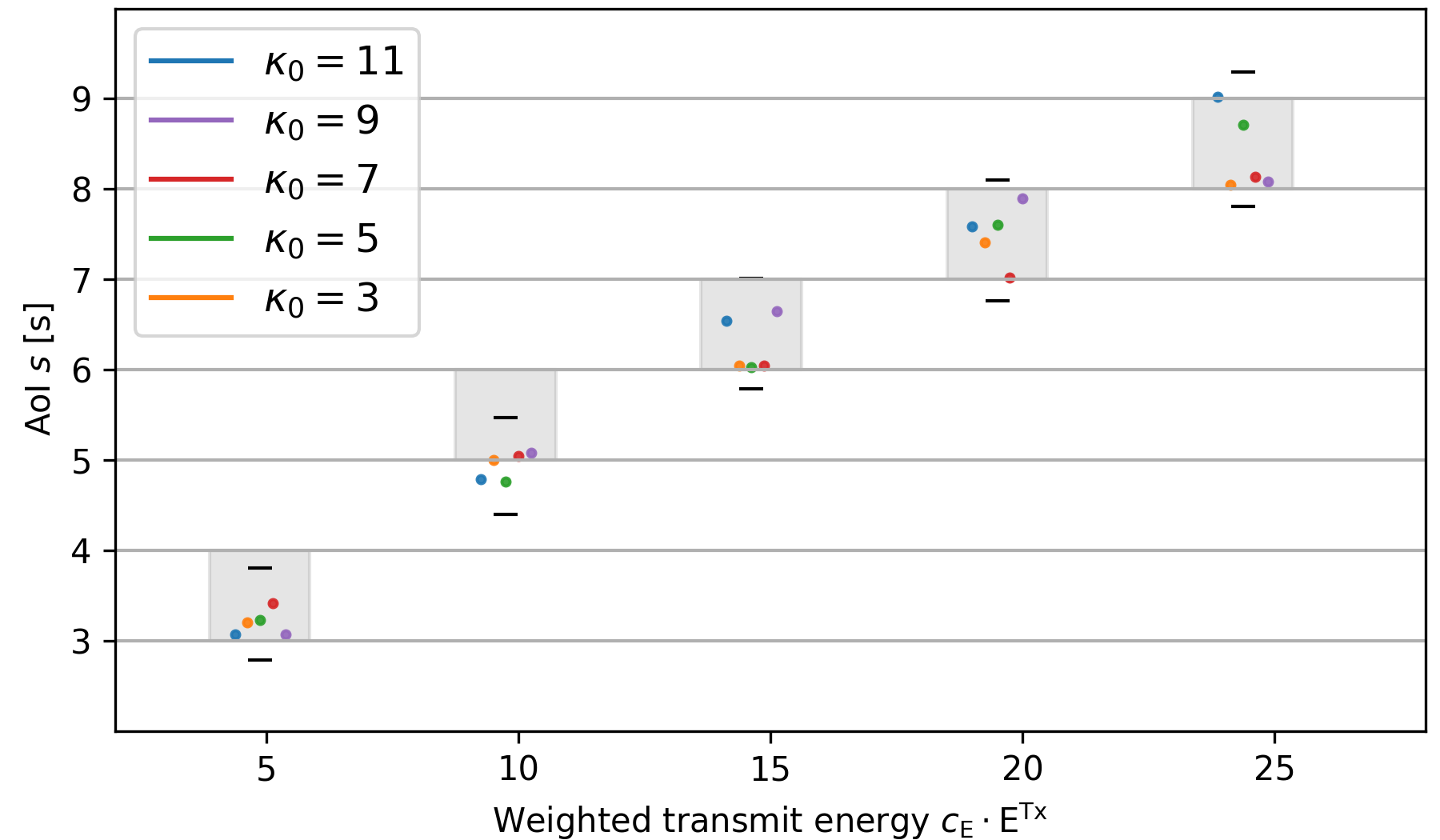}
    \caption{Mean $\threshold_\ell$ during last $1000$ training iterations.}
    \label{fig:thoverenvs}
\end{figure}

\section{Conclusion}
\label{sec:conclusion}
In this letter, we investigated transmission scheduling for event-driven status update systems in which wake-up events are governed by the monitored process. We formulated the transmission decision problem as a \gls{mdp} that jointly optimises information freshness and transmission energy and proved the existence of an optimal threshold policy. Motivated by this theoretical result, we proposed a lightweight \gls{rl} approach based on a \gls{snn} whose architecture explicitly represents threshold policies. Numerical results demonstrated that the proposed approach reliably learns near-optimal threshold policies across different operating regimes while requiring less operations than \glspl{ann}. These results suggest that \glspl{snn} provide an interpretable and energy-efficient framework for transmission scheduling in event-driven \gls{iot} systems. Future work will investigate more general wake-up processes and their impact on the structure of optimal transmission policies, as well as scenarios in which threshold optimality no longer holds.

\bibliographystyle{IEEEtran}
\bibliography{resources/IEEEabrv, resources/biblio}
\newpage
\appendices
\section{Proof of Theorem 1}
\label{sec:optpi}
In this section, we provide a proof for Theorem \ref{thm:main}. For the proof, we make use of the optimal value function 
\begin{equation*}
    \V\colon\states\to\reals;\quad s\mapsto\maximum_{\pi\colon\states\to\Delta\actions}\limit_{H\to\infty}\cexpect{\psi}{\mu(\cdot\mid\pi)}[R(\psi)\mid s_0=s]
\end{equation*}
and the optimal action-value function
\begin{equation}
    \Q\colon\states\times\actions\to\reals;\quad (s,a)\mapsto r(s,a)+\cexpect{\tilde{s}}{F(\cdot\mid s,a)}[\V(\tilde{s})]\label{QtoV}.
\end{equation}
Intuitively, $\V(s)$ captures the maximum possible cumulative reward of being in $s$, and $\Q$ captures the maximum possible cumulative reward of taking action $a$ in state $s$. Note that
\begin{equation}
    \V(s)=\maximum_{a\in\actions}\Q(s,a)\label{VtoQ}.
\end{equation}
Furthermore, we define 
\begin{equation}
     L^\ast(s,a):=\Q(s,a)-r(s,a)
\end{equation}
which is the long-term value of taking action $a$ in state $s$.
The proof broadly follows the structure given in \cite{Dongare}. Firstly, in Lemma \ref{lem:1}, we show that a policy of the form $\vartheta_\threshold$ is optimal if, with increasing $\mathrm{AoI}^\mathrm{Rx}_{w_t}$, making a transmission attempt becomes more favourable than making no transmission attempt. Then, in Lemma \ref{lem:2}, we show that this already holds when, with increasing $\mathrm{AoI}^\mathrm{Rx}_{w_t}$, the long-term value $ L^\ast(s,0)$ of making no transmission decreases. Finally, Lemma \ref{lem:3} shows that this condition is indeed satisfied in our model. 
\begin{lemma}\label{lem:1}
There is a threshold $\threshold\in\mathbb{N}$ s.t. $\vartheta_\threshold$ satisfies \eqref{eq:optProblem}, if
\begin{equation}
    D_Q\colon\states\to\reals;\quad s\mapsto\Q(s,1)-\Q(s,0)
\end{equation}
is monotonically increasing.
\begin{proof}
If $D_Q$ is monotonically increasing, then $\vartheta_\threshold$ with
\begin{equation*}
    \threshold:=
    \begin{cases}
    1,&\text{if }D_Q>0,\\
        \maxaoi+1,&\text{if }D_Q\leq0,\\
        s,&\text{ where }D_Q(s)>0\text{ and }D_Q(s-1)\leq0
    \end{cases}
\end{equation*}
satisfies \eqref{eq:optProblem}, by the definition of $\Q$.
\end{proof}
\end{lemma}
\begin{lemma}\label{lem:2}
$D_Q$ is monotonically increasing, iff \newline$ L^\ast(\cdot,0)\colon\states\to\reals$ is monotonically decreasing.
\begin{proof}
Denote $\partial L^\ast_a(s):= L^\ast(s,a)- L^\ast(s+1,a)$. Then
\begin{align*}
    \partial L^\ast_1(s)&\overset{\eqref{QtoV}}{=}\cexpect{\tilde{s}}{F(\cdot\mid s,1)}[\V(\tilde{s})]-\cexpect{\tilde{s}}{F(\cdot\mid s+1,1)}[\V(\tilde{s})]\\
    &=\sum_{\tilde{s}=1}^{\maxaoi}F(\tilde{s}\mid s,1)\V(\tilde{s})-\sum_{\tilde{s}=1}^{\maxaoi}F(\tilde{s}\mid s+1,1)\V(\tilde{s})\\
    &=\sum_{\tilde{s}=1}^{\maxaoi}\ptx\left[F(\tilde{s}\mid0,0)-F(\tilde{s}\mid0,0)\right]\V(\tilde{s})\\
    &\phantom{=}+(1-\ptx)\left[
    F(\tilde{s}\mid s,0)-F(\tilde{s}\mid s+1,0)\right]\V(\tilde{s})\\
    &=(1-\ptx)\partial L^\ast_0(s).
\end{align*}
Therefore, it follows that
\begin{align*}
    &\phantom{\iff}&&D_Q(s)\leq D_Q(s+1)\\
    &\iff&&\Q(s,1)-\Q(s,0)\leq\Q(s+1,1)-\Q(s+1,0)\\
    &\iff&&\Q(s,1)-\Q(s+1,1)\leq\Q(s,0)-\Q(s+1,0)\\
    &\iff&&\partial L^\ast_1(s)+r(s,1)-r(s+1,1)\\
    &\phantom{\iff}&&\leq\partial L^\ast_0(s)+r(s,0)-r(s+1,0)\\
    &\iff&&(1-\ptx)\partial L^\ast_0(s)+c^\mathrm{AoI}\leq \partial L^\ast_0(s)+c^\mathrm{AoI}\\
    &\iff&&\partial L^\ast_0(s)\geq0\iff L^\ast(s,0)\geq L^\ast(s+1,0),
\end{align*}
as desired.
\end{proof}
\end{lemma}
\begin{lemma}\label{lem:3}
$ L^\ast(\cdot,0)$ is monotonically decreasing.
\begin{proof}
We prove $\partial L^\ast_0(s)\geq0$ for $s<\maxaoi$ by induction on $s$.\newline
\textbf{Induction basis:} Let $s=\maxaoi-1$. We have
\begin{equation}
    \tilde{s}\sim F(\cdot\mid \maxaoi-1,0)\iff\tilde{s}\sim F(\cdot\mid \maxaoi,0)\iff\tilde{s}=\maxaoi\label{ResDistr}.
\end{equation}
Therefore, we obtain
\begin{align*}\partial L^\ast_0(\maxaoi-1)&=\cexpect{\tilde{s}}{F(\cdot\mid\maxaoi-1,0)}[\V(\tilde{s})]-\cexpect{\tilde{s}}{F(\cdot\mid \maxaoi,0)}[\V(\tilde{s})]\\
    &\overset{\eqref{ResDistr}}{=}\V(\maxaoi)-\V(\maxaoi)=0.
\end{align*}
\textbf{Induction step:} Let $s\in\states$ with $s<\maxaoi-1$ and the claim hold for all $s^\prime\in\states$ with $s^\prime>s$.
We have
\begin{align*}
F(\maxaoi-1\mid s,0)&=(1-\pwk)^{\maxaoi-1-s-1}\pwk\\
&=(1-\pwk)^{\maxaoi-(s+1)-1}\pwk+(1-\pwk)^{\maxaoi-(s+1)}\\
&\phantom{=}-(1-\pwk)^{\maxaoi-(s+1)}\left[\pwk+(1-\pwk)\right]\\
&=F(\maxaoi\mid s+1,0)-F(\maxaoi\mid s,0)
\end{align*}
and for any $\tilde{s}<\maxaoi-1$, we have
\begin{align*}
    F(\tilde{s}\mid s,0)&=\indicate[s<\tilde{s}](1-\pwk)^{\tilde{s}-s-1}\pwk\\
    &=\indicate[s+1<\tilde{s}+1](1-\pwk)^{\tilde{s}+1-(s+1)-1}\pwk\\
    &=F(\tilde{s}+1\mid s+1,0).
\end{align*}
Denote $\partial\V(s):=\V(s)-\V(s+1)$ and let
\begin{equation}
    \opti{a}\in\argmax_{a\in\actions}\Q(s+1,a)\label{amaxQ}.
\end{equation}
Then, we have
\begin{align*} 
    \partial\V(s)&\overset{\eqref{VtoQ}}{=}\maximum_{a\in\actions}\Q(s,a)-\maximum_{a\in\actions}\Q(s+1,a)\\
    &\overset{\eqref{amaxQ}}{=}\maximum_{a\in\actions}\Q(s,a)-\Q(s+1,\opti{a})\\
    &\geq\Q(s,\opti{a})-\Q(s+1,\opti{a})\\
    &\overset{\eqref{QtoV}}{=}-c_\mathrm{AoI}(s-(s+1))-c_\mathrm{E}\mathrm{E}^\mathrm{Tx}(\opti{a}-\opti{a})\\
    &\phantom{=}+\cexpect{\tilde{s}}{F(\cdot\mid s,\opti{a})}[\V(\tilde{s})]-\cexpect{\tilde{s}}{F(\cdot\mid s+1,\opti{a})}[\V(\tilde{s})]\\
    &>\partial L^\ast_{\opti{a}}(s)\geq(1-\ptx)\partial L^\ast_0(s).
\end{align*}
Thus, it follows that
\begin{align*}
    \partial L^\ast_0(s)&=\sum_{\tilde{s}=1}^{\maxaoi}F(\tilde{s}\mid s,0)\V(\tilde{s})-\sum_{\tilde{s}=1}^{\maxaoi}F(\tilde{s}\mid s+1,0)\V(\tilde{s})\\
    &=\sum_{\tilde{s}=s+1}^{\maxaoi-1}F(\tilde{s}\mid s,0)[\V(\tilde{s})-\V(\tilde{s}+1)]\\
    &>\sum_{\tilde{s}=s+1}^{\maxaoi-1}\underbrace{F(\tilde{s}\mid s,0)}_{>0}\underbrace{(1-\ptx)\partial L^\ast_0(\tilde{s})}_{\geq0\text{ by \textbf{I.H.}}}\geq0,
\end{align*}
concluding the proof.
\end{proof}
\end{lemma}
The desired result follows directly from the previous lemma.
\begin{proof}[Proof of Theorem~\ref{thm:main}]
    Lemma~\ref{lem:3} $\overset{\text{Lemma~\ref{lem:2}}}{\implies}D_Q$ is monotonically increasing$\implies$ $\vartheta_\threshold$ as defined in Lemma~\ref{lem:1} solves \eqref{eq:optProblem}.
\end{proof}
\section{Policy training}
\label{subsec:training}
The parameters of the proposed \gls{snn} are learnt using the \textsc{Reinforce} algorithm with a state-value baseline. As a policy-gradient method, \textsc{Reinforce} requires the policy to be differentiable with respect to its parameters. While the policy is differentiable with respect to the exploitation parameter $\omega$, it depends on the threshold parameter $\threshold$ through a discontinuous step function. 
Consequently, gradients with respect to $\threshold$ cannot be computed directly. Thus, we use surrogate gradients for the threshold parameter and derive the resulting gradient updates.
For optimising the parameters $\param\in\Theta$ of the policy $\pi_\Theta$, a trajectory $\psi$ of fixed length $H\in\pnats$ is sampled from interactions between the policy and the environment. For the sampled trajectory $\psi=(s_0,a_0,\ldots,s_{H-1},a_{H-1},s_H)$, by $\psi_t=(s_t,a_t,s_{t+1})$ we denote the $(t+1)$-th interaction.

In our case, the policy is trained for $2000$ optimisation steps, each of which uses mini-batches of $16$ trajectories of length $H=32$. The initial $\apaoi{}$ $s_0$ of the environment is sampled uniformly from $\states$.

Based on the sampled trajectory $\psi$, the parameter $\param\in\Theta$ of a policy $\pi_\Theta$ is updated according to the rule
\begin{align*}
    \param^\prime &= \param+\frac{\eta}{H}\sum_{t=1}^{H}\frac{\partial}{\partial\param}\mathcal{L}[\psi_t].
\end{align*}
Here, $\eta\in\reals$ is the learning rate and $\mathcal{L}[\psi_t]$ is the objective term
\begin{align*}
    \mathcal{L}[\psi_t]=\tilde{A}_t\log[\pi_\Theta(a_t\mid s_t)],
\end{align*}
where
\begin{equation*}
    \tilde{A}_t = r(s_t,a_t)+\gamma\tilde{V}(s_{t+1})-\tilde{V}(s_t)
\end{equation*}
is the \textit{TD-error}. The TD-error is computed using an estimate $\tilde{V}\colon\states\to\reals$ of the policy's value function $V^{\pi_\Theta}$. Treating the TD-error $\tilde{A}_t$ as constant with respect to the policy parameters during the policy update yields

\begin{align*}
    \frac{\partial}{\partial\param}\mathcal{L}[\psi_t]&=\tilde{A}_t\frac{\partial}{\partial\param}\log[\pi_\Theta(a_t\mid s_t)]\\
    &=\frac{\tilde{A}_t}
    {\pi_\Theta(a_t\mid s_t)}\frac{\partial}{\partial\param}\pi_\Theta(a_t\mid s_t)
\end{align*}

\section{Learning rule derivation}
For the policy $\zeta_{\threshold,\omega}$, we have $\Theta=\{\threshold,\omega\}$.
Note that
\begin{equation*}
    \zeta_{\kappa,\omega}(a_t\mid s_t)=\begin{cases}
        \sigma(\tilde{I})\quad&\text{ if }a_t=1,\\
        \sigma(-\tilde{I})\quad&\text{ if }a_t=0,
    \end{cases}
\end{equation*}
which yields
$\zeta_{\kappa,\omega}(a_t\mid s_t)=\sigma([2a_t-1]\tilde{I})$.
From this, we get
\begin{align*}
    \frac{\partial}{\partial\param}\mathcal{L}[\psi_t]&=\frac{\tilde{A}_t}
    {\zeta_{\kappa,\omega}(a_t\mid s_t)}\frac{\partial}{\partial\param}\sigma([2a_t-1]\tilde{I})\\
    &=\tilde{A}_t\frac{\sigma(\tilde{I})\sigma(-\tilde{I})}
    {\zeta_{\kappa,\omega}(a_t\mid s_t)}\frac{\partial}{\partial\param}[2a_t-1]\tilde{I}\\
    &=\tilde{A}_t[1-\zeta_{\kappa,\omega}(a_t\mid s_t)][2a_t-1]\frac{\partial}{\partial\param}\tilde{I}
\end{align*}
For the parameter $\omega$, we have
\begin{align*}
    \frac{\partial}{\partial\omega}\tilde{I}&=\frac{\partial}{\partial\omega}\omega[\vartheta_\threshold(1\mid s_t)-\vartheta_\threshold(0\mid s_t)]\\
    &=\vartheta_\threshold(1\mid s_t)-\vartheta_\threshold(0\mid s_t).
\end{align*}
Note that $[2a_t-1][\vartheta_\threshold(1\mid s_t)-\vartheta_\threshold(0\mid s_t)]=2\vartheta_\kappa(a_t\mid s_t)-1$. Hence, the update rule for the parameter $\omega$ is given by
\begin{align*}
    \frac{\partial}{\partial\omega}\mathcal{L}[\psi_t]&=\tilde{A}_t[1-\zeta_{\kappa,\omega}(a_t\mid s_t)][2\vartheta_\kappa(a_t\mid s_t)-1].
\end{align*}
Thus, the exploitation factor $\omega$ is increased if a positive TD-error $\tilde{A}_t>0$ is achieved, while the sampled action $a_t$ aligns with the decision of the threshold policy, i.e., $\vartheta_\threshold(a_t\mid s_t)=1$.

For the parameter $\threshold$, the term $\frac{\partial}{\partial\threshold}\tilde{I}$ requires determining $\frac{\partial}{\partial\threshold}\vartheta_\threshold(0\mid s_t)$ and $\frac{\partial}{\threshold}\vartheta_\threshold(1\mid s_t)$.
Since these terms are non-differentiable w.r.t $\threshold$, they cannot be used directly in the optimisation with gradient-based methods. Instead, the function
$\sigma_g(x)=\frac{1}{1+\exp(-gx)}$ with the steepness factor $g=2.5$ is employed as a differentiable surrogate function during training
\begin{align*}
    \frac{\partial}{\partial\threshold}\tilde{I}   &\approx\frac{\partial}{\partial\threshold}\omega[\sigma_g(s_t-\threshold)-\sigma_g(\threshold-s_t)].
\end{align*}

For the derivative of the surrogate gradients, we have
\begin{align*}
    \frac{\partial}{\partial\threshold}\sigma_g(\threshold-s_t)&=g\sigma_g(s_t-\threshold)\sigma_g(\threshold-s_t)=-\frac{\partial}{\partial\threshold}\sigma_g(s_t-\threshold),
\end{align*}
and thus
\begin{align*}
    \frac{\partial}{\partial\threshold}\tilde{I}
    &\approx-2\omega\frac{\partial}{\partial\threshold}\sigma_g(\threshold-s_t).
\end{align*}
Hence, the update rule for the parameter $\threshold$ is given by
\begin{align*}
    \frac{\partial}{\partial\threshold}\mathcal{L}[\psi_t]\approx\tilde{A}_t[1-\zeta_{\kappa,\omega}(a_t\mid s_t)][1-2a_t]2\omega\frac{\partial}{\partial\threshold}\sigma_g(\threshold-s_t).
\end{align*}

Consequently, a positive TD-error increases the threshold $\kappa$ when the policy chooses to transmit and decreases it when the policy chooses not to transmit.

\end{document}